\begin{document}
\title{Simultaneous testing of hypotheses and alternatives }
\author{ Koldanov P.A., Koldanov A.P.
\\
NRU HSE, Nizhny Novgorod, Russia  
}
\date{}
\newtheorem{definition}{Definition}[section]
\newtheorem{teo}{Theorem}[section]
\newtheorem{lemma}{Lemma}[section]
\newtheorem{example}{Example}[section]
\newtheorem{corollary}{Corollary}[section]
\newtheorem{note}{Note}[section]
\maketitle
\abstract{}
To identify statistically significant conclusions, it is proposed to simultaneously test hypotheses and alternatives. It is shown that, under the condition of free combination of hypotheses and alternatives, the closure method leads to single-step procedures for the simultaneous testing of hypotheses and alternatives. Using an example it is shown the result is lost when hypotheses and alternatives do not satisfy the condition of free combination of hypotheses and alternatives. It is shown that the average number of insignificant conclusions is an uncontrolled part of the risk function of a single-step procedure with an appropriate choice of the loss function.

\noindent {\bf Keywords}
 Statistically significant conclusions, zone of uncertainty, closure method, single-step procedure, Bonferroni-type procedure

\section{Introduction}\label{introduction}

The hypotheses testing theory allows obtaining statistically significant conclusions only in the case of rejecting the tested hypotheses.
If it is desirable to obtain statistically significant conclusions about the truth of the tested hypotheses, it is natural to proceed to testing alternative hypotheses.
When testing hypotheses and alternatives simultaneously, the set of obtained conclusions can be divided into three parts: the set of statistically significant conclusions about the truth of the tested hypotheses, the set of statistically significant conclusions about the falsehood of the tested hypotheses, and the zone of uncertainty (the set of insignificant conclusions), which corresponds to the set of conclusions where both hypotheses and alternatives are simultaneously accepted.
This article investigates the specific of procedures for the simultaneous testing of hypotheses and alternatives that control the probability of at least one false rejection of true hypothesis or true alternative.
The case when hypotheses and alternatives satisfy the condition of free combination is considered.
 
Dividing conclusions into statistically significant and insignificant (or decision rejection) were discussed in early works on simultaneous hypothesis testing \cite{Duncan_1955}, \cite{Lehmann1957_2}. 
Currently, such partitions are of interest in the tasks of constructing and analyzing graphical models \cite{Drton2004}, \cite{Drton2008}, \cite{Koldanov_2023}.
In \cite{Drton2004}, \cite{Drton2008}, a SINful approach to undirected Gaussian graphical model selection is proposed. However, the procedure for identifying the set of statistically significant conclusions about the truth of the hypotheses being tested is heuristic in nature.
The method for dividing the conclusions about the elements of the threshold similarity graph into three parts was proposed in \cite{Koldanov_2023}. The method lead to construction of three sets: a statistically significant set of pairs of nodes connected by an edge; a statistically significant set of pairs of nodes not connected by an edge; and an area of uncertainty. However, the results in \cite{Koldanov_2023} are limited to the simultaneous testing of one-sided hypotheses.

In the article the approach proposed in \cite{Koldanov_2023} is extended to simultaneous testing of hypotheses and alternatives of a more general form. This development allows for distinguishing between statistically significant and insignificant conclusions, considering threshold similarity graph selection as well as undirected Gaussian graphical model selection. Similar considerations apply to covariance graph models, which are defined in terms of marginal independence rather than conditional independence.

The article is organized as follows:  
 In Section \ref{Basic_definitions_and_notations}, the main notations and problem formulation are presented;  
 In Section \ref{mod_proc_Bonf}, the Bonferroni procedure for simultaneous testing of hypotheses and alternatives is discussed;  
 In Section \ref{generalized_closure_principle}, it is shown that, under the condition of free combination of hypotheses and alternatives, applying the basic methods from multiple hypotheses testing theory to the considered class of problems leads to single-step procedures;  
 In Section \ref{characteristics_of_proc}, the relationship between the risk function of a single-step procedure and the expected number of elements in the zone of uncertainty is discussed;  
 In Section \ref{example_free_comb}, the importance of the free combination condition for hypotheses and alternatives is exemplified;  
 In Section \ref{conclusion}, conclusions are presented.
    
\section{Main notations and problem formulation}\label{Basic_definitions_and_notations} 
Let us consider the problem of simultaneous testing of hypotheses $h_i,i=1,\ldots,M$. Let $k_i,i=1,\ldots,M$ be the corresponding alternatives.       
For definiteness, we assume that \( h_i: \theta \in \Theta_i \), \( k_i: \theta \in \Theta_i^c = \Theta \setminus \Theta_i \), where \( \Theta \) is the parameter space.

Suppose that the hypotheses and alternatives \( h_1, k_1, h_2, k_2, \ldots, h_M, k_M \) satisfy the condition of free combination \cite{Liu1996}, i.e., 
\begin{equation}\label{free_combination_condition_strong_sence}
\begin{array}{l}
\forall P\subseteq\{1,2,\ldots,M\} \mbox{ one has } \\
\bigcap_{i\in P}h_i\cap\bigcap_{i\in\{1,2,\ldots,M\}\setminus P}k_i\neq\emptyset\\
\end{array}
\end{equation}
where the intersection is understood as the intersection of the corresponding parameter regions.

Let
$J=\{i; i=1,\ldots,M\}$  be the set of indices of the hypotheses being tested,
$J_t(\theta)$ be the set of indices of the true hypotheses,
$J_f(\theta)$ be the set of indices of the false hypotheses or the true alternatives.

Suppose that the tests \(\varphi_1, \varphi_2, \ldots, \varphi_M; \psi_1, \psi_2, \ldots, \psi_M\) for hypotheses \(h_i\) and alternatives \(k_i\) at an arbitrary level \(0 < \alpha < 1\) are available and have the form:
\begin{equation}\label{test_for_hypotheses}
\varphi_i=\left\{\
									\begin{array}{ll}
									1,& \hat{p}_{h_i}<\alpha\\
									0,& \hat{p}_{h_i}\geq\alpha
									\end{array}
							 \right.,
\end{equation}
\begin{equation}\label{test_for_alternatives}
\psi_i=\left\{\
									\begin{array}{ll}
									1,& \hat{p}_{k_i}<\alpha\\
									0,& \hat{p}_{k_i}\geq\alpha
									\end{array}
							 \right.
\end{equation}
where $\hat{p}_{h_i},\hat{p}_{k_i}$ are p-values of hypotheses $h_i$, alternatives $k_i$ respectively.
Note that if $[\Theta_i]\cap[\Theta_i^c]\neq\emptyset$ for all $i=1,\ldots,M$, where $[\Theta_i]$ is the closure of $\Theta_i$, then $\hat{p}_{h_i}+\hat{p}_{k_i}=1$.
 
Let's denote:

- \( U = \{ i : \varphi_i = 0 \} \) - the set of indices of hypotheses \( h_i \) that are accepted,

- \( \overline{U} = \{ i : \varphi_i = 1 \} \) - the set of indices of hypotheses \( h_i \) that are rejected,

- \( L = \{ i : \psi_i = 1 \} \) - the set of indices of alternatives \( k_i \) that are rejected,

- \( \overline{L} = \{ i : \psi_i = 0 \} \) - the set of indices of alternatives \( k_i \) that are accepted,

- \( |A| \) - the number of elements in the set \( A \).

Let's consider the problem of constructing a procedure \(\delta = (\varphi_1, \varphi_2, \ldots, \varphi_M; \psi_1, \psi_2, \ldots, \psi_M)\) for simultaneous testing of hypotheses \(h_i\) and alternatives \(k_i\), which controls the Family-Wise Error Rate (FWER) at level \(\alpha\) in the strong sense and satisfies the condition
\begin{equation}\label{sets_inclusion}
L\subseteq U.
\end{equation}
It is evident that, to satisfy (\ref{sets_inclusion}), it is sufficient that \(\hat{p}_{h_i} + \hat{p}_{k_i} = 1\) for all \(i=1,\ldots,M\).

If such a procedure \(\delta\) is constructed, then the sets \(\overline{U}\) and \(L\) are the sets of statistically significant conclusions, and the set \(G = U \setminus L\) is the set of simultaneously accepted hypotheses and alternatives, which can naturally be called the set of insignificant conclusions or the zone of uncertainty.
 
%
 
\section{Bonferroni procedure}\label{mod_proc_Bonf}

For the control of directional errors, in \cite{Bauer_1986} a procedure was proposed, which can be considered as simultaneous testing procedure for one-sided hypotheses \(h_i: \theta_i \leq 0\) and alternatives \(k_i: \theta_i > 0,\ i=1,\ldots,M\), that controls the Family-Wise Error Rate (FWER) at level \(\alpha\) in the strong sense. Such procedure has the form:
\begin{equation}\label{modified_bonf_Bauer}
\delta=(\varphi_1,\psi_1,\ldots,\varphi_M,\psi_M)
\end{equation}
$$\varphi_i=\left\{\ 
							\begin{array}{ll}
							1,& \hat{p}_i^{\leq}<\frac{\alpha}{M}\\
							0,& \hat{p}_i^{\leq}\geq\frac{\alpha}{M}
							\end{array},
						\right.
	\psi_i=\left\{\ 
							\begin{array}{ll}
							1,& \hat{p}_i^{\geq}<\frac{\alpha}{M}\\
							0,& \hat{p}_i^{\geq}\geq\frac{\alpha}{M}
							\end{array}
				\right.,
$$ 
where $\hat{p}_i^{\leq},\hat{p}_i^{\geq}$ are p-values of $h_i,k_i$ respectively.

Note that when \(\frac{\alpha}{M} \leq \frac{1}{2}\) (i.e., for \(M \geq 2\)), the inclusion \(L \subseteq U\) holds.

Let's consider arbitrary hypotheses \(h_j, j=1,\ldots,M\), and alternatives \(k_j\). Since
$$P_{\theta}(\exists j\in L\cap J_f\mbox{ or }\exists k\in J_t\cap\overline{U}) =P_{\theta}(\exists j\in L\cap J_f)+P_{\theta}(\exists k\in J_t\cap\overline{U})\leq$$
$$\leq \frac{|J_t|\alpha}{M}+\frac{|J_f|\alpha}{M}=\alpha,\ \  \forall \theta\in\Theta,$$
then the procedure \(\delta\), using tests (\ref{test_for_hypotheses}) and (\ref{test_for_alternatives}), controls the \(FWER \leq \alpha\) in the strong sense in this case.

Therefore if
$$\hat{p}_{h_i}+\hat{p}_{k_i}=1$$
then a straightforward generalization of the procedure (\ref{modified_bonf_Bauer}), which uses tests (\ref{test_for_hypotheses}) and (\ref{test_for_alternatives}), leads to the construction of the sets of statistically significant conclusions \(\overline{U}(x) = J \setminus U(x)\) and \(L(x)\), which have the form:
\begin{equation}\label{sets_mod_proc_Bonf}
\begin{array}{l}
\overline{U}(x)=\{i:\hat{p}_{h_{i}}\leq\frac{\alpha}{M}\}\\
L(x)=\{i:\hat{p}_{h_{i}}>1-\frac{\alpha}{M}\}.
\end{array}
\end{equation}

\section{Closure method}\label{generalized_closure_principle}
When applying the closure method, it is sufficient to test only those intersection hypotheses \(H_Q = \bigcap_{i \in Q} h_i\) that are non-empty.  
In the case of simultaneous testing of the hypotheses and the alternatives, the intersection hypotheses \(H_{J_1,J_2} = \left(\bigcap_{i \in J_1} h_i\right) \cap \left(\bigcap_{i \in J_2} k_i\right)\), where \(J_1 \cap J_2 \neq \emptyset\), are empty.

If the hypotheses and the alternatives being tested satisfy the condition of free combination, then among all non-empty intersection hypotheses, the minimal ones \cite{Finner_2002} are the intersection hypotheses of the form:$$H_{J_1}=H_{J_1,\{1,\ldots,M\}\setminus J_1}=\bigcap_{i\in J_1}h_i\cap\bigcap_{i\in \{1,\ldots,M\}\setminus J_1}k_i.$$
Let $\{1,2,\ldots,M\}\setminus(J_1\cup J_2)=\{i_1,\ldots,i_k\}$. It is evident  
$$H_{J_1,J_2}=\bigcap_{i\in J_1}h_i\cap\bigcap_{i\in J_2}k_i=\bigcup_{\pi\subseteq\{i_1,\ldots,i_k\}}H_{J_1\cup\pi}=\bigcup_{\pi\subseteq\{i_1,\ldots,i_k\}}\left(\bigcap_{i\in J_1\cup\pi}h_i\cap\bigcap_{i\in \{1,\ldots,M\}\setminus \{J_1\cup\pi\}}k_i\right)$$
Thus, under the condition of free combination of hypotheses and alternatives, the hypotheses \(H_{J_1,J_2}\) are non-empty for all \(J_1, J_2 \subset \{1,\ldots,M\}\) such that \(J_1 \cap J_2 = \emptyset\).
    
\begin{teo}\label{closure_equivalent_bonferroni_theorem}
If
\begin{enumerate}
\item The hypotheses \(h_i\) and the alternatives \(k_i\) satisfy the condition of free combination,
\item Tests of the union-intersection type are used to evaluate the intersection hypotheses \(H_{J_1,J_2}\)\cite{Roy1952},
\item The condition (\ref{sets_inclusion}) is satisfied,
\end{enumerate}
then the closure method leads to a single-step procedure for the simultaneous testing of hypotheses and alternatives.
\end{teo}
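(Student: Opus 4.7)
The goal is to show that, once the three conditions are in force, the closed-testing rule reduces to comparing each individual test statistic against a fixed threshold, so the resulting decision rule is a single-step procedure. My plan is, first, to enumerate the intersection hypotheses that the closure must test; second, to describe their union--intersection tests; and third, to read off the closure decision for each elementary $h_i$ and $k_i$.

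By assumption~1 and the computation immediately preceding the theorem, the non-empty intersection hypotheses are exactly the $H_{J_1,J_2}=\bigcap_{i\in J_1}h_i\cap\bigcap_{i\in J_2}k_i$ with $J_1\cap J_2=\emptyset$; these are therefore the only hypotheses the closure method actually has to test. By assumption~2, the UI test of $H_{J_1,J_2}$ rejects iff at least one of its components does, i.e.\ iff $\exists j\in J_1:\varphi_j=1$ or $\exists j\in J_2:\psi_j=1$, where $\varphi_j,\psi_j$ are the individual tests fixed at the outset.

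Apply the closure rule to an elementary $h_i$: it rejects $h_i$ iff every non-empty intersection hypothesis containing $h_i$, i.e.\ every $H_{J_1,J_2}$ with $i\in J_1$ and $J_1\cap J_2=\emptyset$, is rejected by its UI test. The most restrictive of these conditions comes from the smallest such intersection, $H_{\{i\},\emptyset}$, whose UI test rejects iff $\varphi_i=1$. Conversely, as soon as $\varphi_i=1$, every larger $H_{J_1,J_2}$ with $i\in J_1$ is automatically rejected, since $\varphi_i$ is one of its UI components. Hence the closure decision on $h_i$ coincides with $\varphi_i$. A symmetric argument built on $H_{\emptyset,\{i\}}$ shows that the closure decision on $k_i$ coincides with $\psi_i$. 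Because assumption~3 guarantees $L\subseteq U$, these two families of elementary decisions are mutually compatible and jointly define a bona fide single-step procedure.

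The main place where I expect to have to be careful is not the combinatorics of the closure step but the level calibration of the UI tests. I would verify that, with the individual thresholds chosen as in Section~\ref{mod_proc_Bonf}, the Bonferroni-type bound $P_\theta(\exists j\in J_1:\varphi_j=1)+P_\theta(\exists j\in J_2:\psi_j=1)\le\alpha$ holds so that each $H_{J_1,J_2}$ is in fact tested at level $\alpha$ and the closed procedure inherits strong FWER control; in that calibration the single-step procedure produced by the closure is precisely~(\ref{sets_mod_proc_Bonf}).
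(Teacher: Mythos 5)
There is a genuine gap, and it concerns the calibration of the union--intersection tests, exactly the point you flag at the end as needing care. In the closure method each intersection hypothesis $H_{J_1,J_2}$ must be tested at level $\alpha$, so its UI test rejects when $\min_{i\in J_1,\,j\in J_2}(\hat{p}_{h_i},\hat{p}_{k_j})<\alpha(|J_1\cup J_2|)$ with a threshold $\alpha(|J_1\cup J_2|)$ that \emph{shrinks} as the intersection grows (e.g.\ $\alpha/k$ or $1-\sqrt[k]{1-\alpha}$). Your argument instead fixes the component tests $\varphi_j,\psi_j$ ``at the outset'' with a single threshold and declares the smallest intersection $H_{\{i\},\emptyset}$ the most restrictive, with every larger intersection ``automatically rejected'' once $\varphi_i=1$. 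That monotonicity fails under the correct calibration: $H_{\{i\},\emptyset}$ is tested at threshold $\alpha(1)=\alpha$, while the maximal intersections are tested at $\alpha(M)<\alpha$, so the binding constraints are the \emph{largest} intersections, not the smallest, and the single-step threshold that comes out is $\alpha(M)$, not $\alpha$. Your conclusion ``the closure decision on $h_i$ coincides with $\varphi_i$'' is therefore false in general; it holds only in the degenerate case where one deliberately uses the common conservative threshold $\alpha(M)=\alpha/M$ for every intersection (the Bonferroni calibration), which is a special case the theorem does not restrict to.

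The second, related omission is the actual role of condition~3. You invoke $L\subseteq U$ only to say the $h$- and $k$-decisions are ``mutually compatible,'' but in the paper's proof it is the engine of the argument: for each $m\neq 1$ the family of intersections containing $h_1$ contains both an $h_m$-version and a $k_m$-version, and since $\hat{p}_{h_m}$ and $\hat{p}_{k_m}$ cannot both be below $\alpha(|J_1\cup J_2|)$, the components with index $m$ cannot simultaneously drive the rejection of both versions. Hence rejecting \emph{all} intersections containing $h_1$ forces $\hat{p}_{h_1}<\alpha(M)$ itself; without this step one cannot rule out that the rejections of the large intersections are produced by other components, and the collapse to a single-step rule does not follow. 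To repair the proof you need to (i) keep the size-dependent thresholds $\alpha(|J_1\cup J_2|)$, (ii) use condition~3 to show that the only way to reject every maximal intersection containing $h_1$ is via $\hat{p}_{h_1}<\alpha(M)$, and (iii) note that $\hat{p}_{h_1}<\alpha(M)\le\alpha(|J_1\cup J_2|)$ then handles all smaller intersections.
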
 
\begin{proof}
Without loss of generality consider hypothesis $h_1$.  
Hypothesis $h_1$ is rejected by closure method \cite{Marcus1976} if and only if all non-empty hypotheses $H_{J_1,J_2}$, such that $\{1\}\subseteq J_1$, are rejected. Let $J_1=\{1,i_2,\ldots,i_{|J_1|}\}, J_2=\{j_1,\ldots,j_{|J_2|}\}, J_1\cap J_2=\emptyset$. 

Tests of the union-intersection type used for \(H_{J_1,J_2}\) have the form \cite{Roy1952}:
$$
\delta_{J_1,J_2}=\left\{\
									\begin{array}{ll}
									1,& \min_{i\in J_1,j\in J_2}(\hat{p}_{h_i},\hat{p}_{k_j})<\alpha(|J_1\cup J_2|)\\
									0,&\mbox{ otherwise, }
									\end{array}
								 \right.
$$
where $\alpha(|J_1\cup J_2|)$ is defined from
\begin{equation}\label{union_intersection_test_hypotheses_alternative_L}
\begin{array}{l}
P_{\theta}(\min_{i\in J_1,j\in J_2}(\hat{p}_{h_i},\hat{p}_{k_j})<\alpha(|J_1\cup J_2|))\leq\alpha\\
\end{array}
\end{equation}
$$\mbox{ under }\theta\in \bigcup_{i\in J_1\cup J_2}\left([ \Theta_i ] \cap [\Theta_i^c ]\right).$$

When \(|J_1 \cup J_2| > 1\) and for any index \(m \in \{2, \ldots, M\}\), the hypotheses set \(H_{J_1,J_2} : \{1\} \subset J_1\) includes both intersection hypotheses containing \(h_m\) and intersection hypotheses containing \(k_m\), with \(m \neq 1\) and \(m \in J_1 \cup J_2\), where \(J_1 \cap J_2 = \emptyset\).  
According to condition 3 of the theorem \(\hat{p}_{h_m}\) and \(\hat{p}_{k_m}\) cannot both be less than \(\alpha(|J_1 \cup J_2|)\).  
Therefore, all hypotheses \(H_{J_1,J_2}\) can be rejected simultaneously if and only if \(\hat{p}_{h_1} < \alpha(|J_1 \cup J_2|)\).  
Since \(|J_1 \cup J_2| \leq M\), it follows that \(\min_{|J_1 \cup J_2|} \alpha(|J_1 \cup J_2|) = \alpha(M)\).
 
Therefore, the test for \(h_1\) has the form:
$$\varphi_1=\left\{\
									\begin{array}{ll}
									1,& \hat{p}_{h_1}<\alpha(M)\\
									0,& \hat{p}_{h_1}\geq\alpha(M)
									\end{array}
							 \right.,
$$
Then the procedure for the simultaneous testing of hypotheses \(h_1, \ldots, h_M\) and alternatives \(k_1, \ldots, k_M\) is a single-step procedure and has the form:
\begin{equation}\label{single_step_proc_test_hypot_alter}
\left(\varphi_1,\ldots,\varphi_M,\psi_1,\ldots,\psi_M\right)
\end{equation}
where 
$$\varphi_i=\left\{\
									\begin{array}{ll}
									1,& \hat{p}_{h_i}<\alpha(M)\\
									0,& \hat{p}_{h_i}\geq\alpha(M)
									\end{array}
							 \right.,
$$
$$\psi_i=\left\{\
									\begin{array}{ll}
									1,& \hat{p}_{k_i}<\alpha(M)\\
									0,& \hat{p}_{k_i}\geq\alpha(M)
									\end{array}
							 \right.,
$$
\begin{equation}\label{union_intersection_test_hypotheses_alternative_M_equivalent}
P_{\theta}(\min_{i\in\{1,\ldots,M\}}(\hat{p}_{h_i})<\alpha(M))\leq\alpha
\end{equation}

\end{proof}
\begin{note}
If \(\hat{p}_{h_i}\) are independent for all \(i=1,\ldots,M\), then \(\alpha(M) = 1 - \sqrt[M]{1 - \alpha}\).  
If, for testing intersection hypotheses, the Bonferroni procedure is used, then \(\alpha(M) = \frac{\alpha}{M}\).
\end{note}
Similar reasoning shows that the partitioning method \cite{Finner_2002} leads to a single-step procedure.  
Moreover, from \cite{Shaffer_1986} follows that the modified Holm procedure for the simultaneous testing of hypotheses \(h_1, \ldots, h_M\) and alternatives \(k_1, \ldots, k_M\) is also a single-step procedure.  
Essentially, these results are a consequence of the fact that out of the \(2M\) hypotheses and alternatives being tested - satisfying the condition of free combination - exactly \(M\) are true.

\section{Characteristics of procedures for simultaneous testing of hypotheses and alternatives}\label{characteristics_of_proc}
Under $\hat{p}_{h_i}+\hat{p}_{k_i}=1$ the procedure (\ref{single_step_proc_test_hypot_alter}), (\ref{union_intersection_test_hypotheses_alternative_M_equivalent}) could be written in the following form: 
\begin{equation}\label{three_decision_procedure}
\delta=(\delta_1,\ldots,\delta_M)
\end{equation}
where
$$
\delta_i=\left\{\
					 \begin{array}{ll}
							d_1,&\mbox{ if }\hat{p}_{h_i}>1-\alpha(M)\\
							d_2,&\mbox{ if }\hat{p}_{h_i}<\alpha(M)\\
							d_3,&\mbox{ if }\alpha(M)\leq \hat{p}_{h_i}\leq 1-\alpha(M),\\
					 \end{array}
				\right.	
$$
\(d_1\) - a significant decision that the alternative \(k_i\) is false, i.e., the hypothesis \(h_i\) is true,

\noindent\(d_2\) - a significant decision that the hypothesis $h_i$  is false, i.e. the alternative \(k_i\) is false,

\noindent\(d_3\) - a decision to accept both the hypothesis \(h_i\) and the alternative \(k_i\), for \(i=1,\ldots,M\).

In the case
$$L=\{i:\delta_i=d_1\},$$
$$\overline{U}=\{i:\delta_i=d_2\},$$
$$G=U\setminus L=\{i:\delta_i=d_3\}.$$
As a characteristic of the procedure (\ref{three_decision_procedure}) for simultaneous testing of the hypotheses and the alternatives, it is natural to use \(E(|G|)\) \cite{Koldanov_2023}.

Let's show that, with an appropriate choice of a loss function, \(E(|G|)\) is closely related to the risk function \cite{Lehmann1957_1}, \cite{Lehmann1957_2}.  
Denote by \(a_i, b_i\), for \(i=1,\ldots,M\), the losses associated with incorrect rejection and acceptance of the hypothesis \(h_i\); and by \(c_i, l_i\), for \(i=1,\ldots,M\), the losses from incorrect rejection and acceptance of the alternative \(k_i\).

For a fixed \(i\) define the loss function \(w(\theta, \delta_i)\) as follows:
\begin{equation}\label{loss_function}
w(\theta, \delta_i)=\left\{\
											\begin{array}{ll}
											 0,&\theta\in h_i,\delta_i=d_1\\
											 a_i+l_i,& \theta\in h_i,\delta_i=d_2\\
											 l_i,& \theta\in h_i,\delta_i=d_3\\
											 c_i+b_i,& \theta\in k_i,\delta_i=d_1\\
										   0,&\theta\in k_i,\delta_i=d_2\\
							         b_i,& \theta\in k_i,\delta_i=d_3\\
											\end{array}
										\right.	
\end{equation}

Suppose that \( a_i = c_i = a \), \( b_i = l_i = b \), and \( a + b = 1 \) for all \( i=1,\ldots,M \). Moreover, assume that the loss function is additive \cite{Lehmann1957_1}, \cite{Lehmann1957_2}, i.e.,
$$w(\theta,\delta)=\sum_{i=1}^Mw(\theta,\delta_i).$$
Then
$$R(\theta,\delta)=E_{\theta}w(\theta,\delta)=(a+b)\sum_{i:\theta\in h_i}P_{\theta}(\delta_i=d_2)+(a+b)\sum_{i:\theta\in k_i}P_{\theta}(\delta_i=d_1)+b\sum_{i=1}^MP_{\theta}(\delta_i=d_3)=$$
$$=\sum_{i:\theta\in h_i}P_{\theta}(\delta_i=d_2)+\sum_{i:\theta\in k_i}P_{\theta}(\delta_i=d_1)+b\sum_{i=1}^MP_{\theta}(\delta_i=d_3)$$
The third term \( R(\theta, \delta) \), up to \( b \), coincides with \( E(|G|) \). If the procedure \( (\varphi_1, \psi_1, \ldots, \varphi_M, \psi_M) \) controls the Family-Wise Error Rate (FWER) at level \( \leq \alpha \), then the sum of the first two terms of \( R(\theta, \delta) \) does not exceed \( \alpha \). Then the uncontrolled component of the risk function, under the loss function given by (\ref{loss_function}), corresponds to the expectation of the number of elements in the uncertainty zone.

\section{Counterexample}\label{example_free_comb}
Let $x_1,\ldots,x_n$ be a sample from $N(\theta,1)$.
Given \( \theta_1 < \theta_2 \), consider the problem of constructing a procedure \( \delta = (\varphi_1, \psi_1, \varphi_2, \psi_2) \) for simultaneous testing of the hypotheses \( h_i: \theta \geq \theta_i \), \( i=1,2 \), against the alternatives \( k_i: \theta < \theta_i \).

Since \( h_2 \cap k_1 = \emptyset \), the hypotheses \( h_i: \theta \geq \theta_i \), \( i=1,2 \), and the alternatives \( k_i: \theta < \theta_i \) obviously do not satisfy the condition of free combination of hypotheses and alternatives.

Standard tests for hypotheses \( h_i: \theta \geq \theta_i \) have the form:
$$
\varphi_{h_i}=\left\{\
												\begin{array}{ll}
												 1,&\sqrt{n}(\overline{x}-\theta_i)<-c_{\alpha}\\
												 0,&\sqrt{n}(\overline{x}-\theta_i)\geq -c_{\alpha}		
												\end{array}
											\right.	
$$
$-c_{\alpha}$ - level $\alpha$ quantile of the standard normal distribution $N(0,1)$.
 
Standard tests for alternatives $k_i:\theta<\theta_i$ have the form:
$$
\varphi_{k_i}=\left\{\
												\begin{array}{ll}
												 1,&\sqrt{n}(\overline{x}-\theta_i)>c_{\alpha}\\
												 0,&\sqrt{n}(\overline{x}-\theta_i)\leq c_{\alpha}		
												\end{array}
											\right.	
$$
$c_{\alpha}$ - percentile of the standard normal distribution $N(0,1)$. 

Construction of Bonferroni procedure is straightforward.

The tests \( \varphi_1, \psi_1, \varphi_2, \psi_2 \) for \( h_1, k_1, h_2, k_2 \) of the procedure \( \delta = (\varphi_1, \psi_1, \varphi_2, \psi_2) \), constructed using the closure method, have the form:
$$
\varphi_{1}=\left\{\
												\begin{array}{ll}
												 1,&\sqrt{n}(\overline{x}-\theta_1)<-c_{\frac{\alpha}{2}}\\
												 0,&\sqrt{n}(\overline{x}-\theta_1)\geq -c_{\frac{\alpha}{2}}		
												\end{array}
											\right.	
$$
$$
\psi_{1}=\left\{\
												\begin{array}{ll}
												 1,&\sqrt{n}(\overline{x}-\theta_1)>c_{\alpha}\\
												 0,&\sqrt{n}(\overline{x}-\theta_1)\leq c_{\alpha}		
												\end{array}
											\right.	
$$
$$
\varphi_{2}=\left\{\
												\begin{array}{ll}
												 1,&\sqrt{n}(\overline{x}-\theta_2)<-c_{\alpha}\\
												 0,&\sqrt{n}(\overline{x}-\theta_2)\geq -c_{\alpha}		
												\end{array}
											\right.	
$$
$$
\psi_{2}=\left\{\
												\begin{array}{ll}
												 1,&\sqrt{n}(\overline{x}-\theta_2)>c_{\frac{\alpha}{2}}\\
												 0,&\sqrt{n}(\overline{x}-\theta_2)\leq c_{\frac{\alpha}{2}}		
												\end{array}
											\right.	
$$

Thus, the procedure \( \delta = (\varphi_1, \psi_1, \varphi_2, \psi_2) \), constructed using the closure method, do not coincide with Bonferroni procedure.

\section{Conclusion}\label{conclusion}

Specific of procedures for simultaneous testing of hypotheses and alternatives is investigated. Such procedures allow for distinguishing between significant and insignificant conclusions about the hypotheses being tested. The idea of constructing such procedures for analyzing a certain class of graphical models was proposed in \cite{Koldanov_2023}. However, the results obtained in \cite{Koldanov_2023} was limited to testing one-sided hypotheses and application of the  Bonferroni procedure, which in \cite{Romano_2016} is referred to as a modified Bonferroni procedure. 

This article develops the results obtained in \cite{Koldanov_2023}. It is proven that, under the condition of free combination of hypotheses and alternatives, the closure method leads to single-step procedures. Using an example it is shown the result is lost when
hypotheses and alternatives do not satisfy the condition of free combination
of hypotheses and alternatives. It is shown that the average number of insignificant conclusions can be considered as an uncontrolled part of the risk function of a single-step procedure. This allows for applying the results on constructing optimal procedures, obtained in \cite{Lehmann1957_1} and \cite{Lehmann1957_2}.

\end{document}